\newif\ifcompj\compjfalse\ifcompj
\newcommand{\Z}{\mathbb{Z}}
\newcommand{\F}{\mathbb{F}}
\newcommand{\hashfun}[1]{\texttt{#1}}
\newcommand{\muhash}{\hashfun{MuHash}}
\newcommand{\lthash}{\hashfun{LtHash}}
\newcommand{\adhash}{\hashfun{AdHash}}
\newcommand{\msetvaddhash}{\hashfun{MSet-VAdd-Hash}}
\newcommand{\msetaddhash}{\hashfun{MSet-Add-Hash}}
\newcommand{\msetmuhash}{\hashfun{MSet-Mu-Hash}}
\newcommand{\Expect}{{\rm I\kern-.3em E}}
\newcommand{\card}[1]{\lvert #1 \rvert}
\newcommand{\gf}[1]{\F_{2^{#1}}}
\newcommand{\cpuinstruction}[1]{\texttt{#1}}
\newcommand{\tablememory}[1]{s_{#1}}
\providecommand\given{} % so it exists
\newcommand\SetSymbol[1][]{
   \nonscript\,#1\vert\nonscript\,\mathopen{}\allowbreak}
\DeclarePairedDelimiterX\Set[1]{\lbrace}{\rbrace}%
 { \renewcommand\given{\SetSymbol[\delimsize]} #1 }
\renewcommand{\emptyset}{\varnothing}
\DeclareMathOperator{\vol}{vol}
\newcommand*{\centerfloat}{%
  \parindent \z@
  \leftskip \z@ \@plus 1fil \@minus \textwidth
  \rightskip\leftskip
  \parfillskip \z@skip}
\begin{document}

\title{Elliptic Curve Multiset Hash}

\ifcompj
\author{Jeremy Maitin-Shepard}
\email{jbms@cs.berkeley.edu}
\affiliation{UC Berkeley, USA}

\author{\\Mehdi Tibouchi}
\affiliation{NTT Secure Platform Laboratories, Japan}

\author{Diego F.\ Aranha}
\affiliation{University of Campinas, Brazil}

\shortauthors{J.~Maitin-Shepard, M.~Tibouchi, D.F.~Aranha}

\keywords{Homomorphic Hashing; Elliptic Curves; Efficient
Implementation; GLS254; PCLMULQDQ}
\else
\author{Jeremy Maitin-Shepard\\
UC Berkeley\\
{\tt\small jbms@cs.berkeley.edu}
\and
Mehdi Tibouchi\\
NTT Secure Platform Laboratories\\
{\tt\small tibouchi.mehdi@lab.ntt.co.jp}
\and
Diego F.\ Aranha\\
Institute of Computing, University of Campinas\\
{\tt\small dfaranha@ic.unicamp.br}}
\date{}
\newtheorem{definition}{Definition}
\newtheorem{theorem}{Theorem}
\maketitle
\fi

\begin{abstract}
A homomorphic, or incremental, multiset hash function, associates a hash value to
arbitrary collections of objects (with possible repetitions) in such a
way that the hash of the union of two collections is easy to compute from
the hashes of the two collections themselves: it is simply their sum
under a suitable group operation. In particular, hash values of large
collections can be computed incrementally and/or in parallel. Homomorphic
hashing is thus a very useful primitive with applications ranging from
database integrity verification to streaming set/multiset comparison and
network coding.

\medskip
Unfortunately, constructions of homomorphic hash functions in the
literature are hampered by two main drawbacks: they tend to be much
longer than usual hash functions at the same security level (e.g.{} to
achieve a collision resistance of $2^{128}$, they are several thousand
bits long, as opposed to $256$ bits for usual hash functions), and they
are also quite slow.

\medskip
In this paper, we introduce the Elliptic Curve Multiset Hash (ECMH),
which combines a usual bit string-valued hash function like BLAKE2 with
an efficient
encoding into binary elliptic curves to overcome both difficulties. On
the one hand, the size of ECMH digests is essentially optimal: $2m$-bit
hash values provide $O(2^m)$ collision resistance. On the other hand, we
demonstrate a highly-efficient software implementation of ECMH, which our thorough empirical evaluation shows to be capable of processing over 3 million set elements per second on a \SI{4}{GHz} Intel
Haswell machine at the 128-bit security level---many times
faster than previous practical methods.

\medskip
While incremental hashing based on elliptic curves has been considered previously~\cite{brown2008encrypted}, the proposed method was less efficient, susceptible to timing attacks, and potentially patent-encumbered~\cite{brown2007method}, and no practical implementation was demonstrated.
\ifcompj\else\par
\bigskip\noindent
\textbf{Keywords:} homomorphic hashing, elliptic curves, efficient
implementation, GLS254, PCLMULQDQ.\fi
\end{abstract}

\ifcompj\maketitle\fi

\section{Introduction}

\subsubsection*{Homomorphic hashing}
A \emph{multiset} is a generalization of a set in which each element has
an associated integer \emph{multiplicity}. Given a possibly infinite set
$A$, a set (resp.{} multiset) homomorphic hash function on $A$ maps
finite subsets of $A$ (resp.{} finitely-supported multisets on $A$) to
fixed-length hash values, allowing incremental updates: when new elements
are added to the (multi)set, the hash value of the modified (multi)set
can be computed in time proportional to the degree of modification.

The incremental update property makes homomorphic hashing a very useful
and versatile primitive. It has found applications in many areas of
computer security and algorithmics, including network
coding~\cite{DBLP:conf/infocom/GkantsidisR06} and verifiable peer-to-peer
content distribution~\cite{DBLP:conf/sp/KrohnFM04}, secure Internet
routing~\cite{DBLP:conf/nsdi/SubramanianRSSK04}, Byzantine fault
tolerance~\cite{DBLP:conf/osdi/CastroL99,DBLP:journals/tocs/CastroL02},
\emph{streaming} set and multiset equality
comparison~\cite{cathalo2009comparing}, and various aspects of database
security, such as access pattern privacy~\cite{DBLP:conf/ccs/2008} and
integrity protection~\cite{clarke2003incremental}.

This latter use case provides a simple example of how the primitive is
used in practice: one can use homomorphic hashing to verify the integrity
of a database with a transaction log, by computing a hash value for each
transaction in such a way that the hash of the complete
database state is equal to the (appropriately-defined) sum of the hashes
of all transactions. Another observation~\cite{bellare1997new} is
that homomorphic hashing can be used for incremental and parallel hashing
of lists, arrays, strings and other similar data structures: for example,
the list $(b_1, \ldots, b_n)$ can be represented as the set
$\Set*{(1,b_1), \ldots, (n,b_n)}$, and it suffices to apply the
homomorphic hash function to that set.

\subsubsection*{Constructing homomorphic hash functions}
A framework for constructing provably secure homomorphic hash functions
(in some suitably
idealized model, such as the random oracle model) was introduced by
Bellare and Micciancio~\cite{bellare1997new}, and later extended to the multiset hash
setting by Clarke~{et al.}~\cite{clarke2003incremental}, and revisited by
Cathalo~{et al.}~\cite{cathalo2009comparing}.

Roughly speaking, the framework of Bellare and Micciancio can be described as
follows. To construct a (multi)set homomorphic hash function on $A$, one
can start with a usual hash function $\hat H$ from $A$ to some additive
group $G$, and extend it to finite subsets of $A$ (resp.{} multisets on
$A$) by setting $H(\Set*{a_1,\dots,a_n}) = \hat H(a_1)+\cdots+\hat
H(a_n)$ (resp.{} $H(\Set*{a_1^{m_1},\dots,a_n^{m_n}}) = m_1\cdot \hat
H(a_1)+\cdots+ m_n\cdot\hat H(a_n)$, where $m_i$ is the multiplicity of
$a_i$).  And in fact, it is clear that all possible homomorphic hash
functions arise in that way.  Note that as in Clarke~{et
al.}~\cite{clarke2003incremental}, and unlike the original framework of
Bellare and Miciancio~\cite{bellare1997new}, there is no block index $i$ included in the hash $\hat{H}(a_i)$ of each element $a_i$, because we are hashing unordered sets/multisets, rather than ordered sequences of blocks.

Assume that the underlying hash function $\hat H$ is ideal (i.e.{} it
behaves like a random oracle). Then we can ask when the corresponding
homomorphic hash function $H$ is secure (collision resistant, say). This
translates to a knapsack-like number-theoretic assumption on the group
$G$, which Bellare and Micciancio show holds, for example, when the
discrete logarithm problem is hard in $G$.

Concretely, Bellare and Micciancio and the authors of subsequent works
propose a number of possible instantiations for $H$ which
essentially amount to choosing $G = \Z_p^\times$ or $G = \Z_m^n$ for
suitable parameters $p,m,n$. These concrete instantiations yield simple
implementations, but they all suffer from suboptimal output size (they
require outputs of several thousand bits to achieve collision resistance
at the $128$ security level), and their efficiency is generally
unsatisfactory. Essentially all practical applications of homomorphic
hashing in the security literature seem to focus on the case $G =
\Z_p^\times$, called \muhash.

\subsubsection*{Our contributions}
Within Bellare and Micciancio's framework, constructing a homomorphic
hash function amounts to choosing a group $G$ where the appropriate
number-theoretic assumption holds, together with a hash function to $G$
whose behavior is close enough to ideal for the security proof to go
through.

In this paper, we propose a novel concrete construction of a multiset
hash function by choosing $G$ as the group of points of a binary elliptic
curves, and picking the hash function following the approach of
Brier~{et al.}~\cite{brier2010efficient} (which we improve upon slightly)
applied to the binary curve variant of Shallue and van de Woestijne's
encoding function~\cite{shallue2006construction}. We also describe a
software implementation of our proposal (building upon the work of
Aranha~{et al.}~\cite{aranha2014binary} for binary curve hashing, and using
BLAKE2~\cite{aumasson2013blake2} as the actual underlying hash function)
and provide extensive performance results showing that our function
outperforms existing methods by a large margin on modern CPU
architectures (especially those supporting carry-less multiplication).
Furthermore, choosing an elliptic curve (with small cofactor) for the
group of hash values solves the ``output size'' problem of homomorphic
hashing outright: $O(2^n)$ collision security is achieved with roughly
$2n$-bit long digests. Yet, they do not seem to have been used in
concrete implementations of homomorphic hashing so far\footnote{One can mention EECH~\cite{brown2008encrypted} as
    relevant related work that also uses binary curves for hashing, but
    the authors didn't consider \emph{homomorphic} hashing at all, and
their functions seems poorly suited for that goal. See
Section~\ref{sec:discussion} for a more detailed discussion.}. One can
wonder why; the most
likely explanation is that usual methods for hashing to elliptic
curves are far too inefficient to make curves attractive from a
performance standpoint: almost all such methods require at least one full
size exponentiation in the base field of the curve, which will be much
more costly by itself than the single multiplication (in a much larger
field) required by \muhash---even on curves over fast prime fields at the
128-bit security level~\cite{aranha2014binary}, such an encoding function is over $3$ times slower than \muhash{} at equivalent
security on Haswell, and over $20$ times slower than our construction.  Only by using binary curves and relatively
sophisticated implementation techniques do we avoid that stumbling block
and prove that elliptic curves can be competitive. As a result,
we achieve a processing speed of over 3 million set elements per second on a
\SI{4}{GHz} Intel Haswell CPU at the $128$-bit security level.
Speedups are expected with the release of Intel Broadwell processors and
its improved implementation of carry-less multiplication.

\subsubsection*{Are binary elliptic curves safe?}

Recently, new developments have been announced regarding the asymptotic
complexity of the discrete logarithm problem on binary elliptic curves,
particularly by Semaev~\cite{semaev2015}. These results are somewhat
controversial, since they are based on heuristic assumptions that
prevailing evidence suggests are unlikely to
hold~\cite{kosters2015,DBLP:conf/crypto/HuangKY15}, and
their storage requirements appear to make them purely theoretical
anyway~\cite{galbraith2015ellipticnews}.

However, if Semaev's claims of an $L[1/2]$ attack turn out to be correct,
the asymptotic security of binary elliptic curve-based ECMH would be
reduced. The \emph{concrete} security of our construction, on the other
hand, would be completely unaffected on curves of up to 300+ bits (and in
particular at the 128-bit security level on GLS254), since the claimed
attack is worse than generic attacks on such curves. Moreover, even if
\emph{actually practical} $L[1/2]$ attacks were to be found, ECMH on
binary curves is likely to remain attractive, since it mainly competes
against \muhash, which is vulnerable to an $L[1/3]$ subexponential
attack.

For all these reasons, we believe that ECMH on binary elliptic curves is
a safe choice for security-minded practitioners, and that the switch from
\muhash{} to ECMH is entirely justified in view of the considerable
performance gain (which lets designers choose a higher security margin
and still come out far ahead).

\section{Homomorphic Multiset Hash Function}

Formally, we define a multiset $M \in \Z^{(A)}$ as a function with finite
support mapping a base set $A$ to the integers $\Z$.  As an extension of
the usual definition in which multiplicities are restricted to $\Z_{\geq
0}$, we allow negative multiplicities as well.  We will implicitly consider
subsets $S \subseteq A$ to be multisets in $\Z^{(A)}$.

Clarke~{et al.}~\cite{clarke2003incremental} introduce a definition of a multiset hash function that efficiently supports incrementally adding (multisets of) elements.  We give a simpler (but nearly equivalent\footnote{We give a proof of equivalence (under a mild assumption) in~\cref{sec:clarke-equivalence}.}) definition that makes the connection to homomorphic hash functions~\cite{krohn2004fly} explicit:

\begin{definition}
  Let $A$ be a set and let $(G, +_G)$ be a finite group.  A function $H
\colon \Z^{(A)} \rightarrow G$ that maps multisets over the base set $A$
to a point in $G$ is said to be a \emph{homomorphic multiset hash
function} if $H$ is a group homomorphism from the pointwise-additive
group of functions $(\Z^{(A)}, +)$ to $(G, +_G)$; equivalently,
  $H(M_1 + M_2) = H(M_1) +_G H(M_2)$ for all $M_1, M_2 \in \Z^{(A)}$.  We define $\hat{H} \colon A \rightarrow G$ by $\hat{H}(a) = H(\Set{a})$.
  \label{def:homomorphic-multiset-hash}
\end{definition}

This definition minimally captures an intuitive notion of a multiset hash
function that supports incrementally adding and removing (multisets of)
elements.  These incremental updates are efficient assuming that addition
and negation in $G$ can be performed efficiently and $H(M)$ can be
computed efficiently (e.g.\ in time linear in the representation length
of the non-zero values of $M$).  Note that since pointwise addition in
$\Z^{(A)}$ is commutative, the relevant subgroup $H(\Z^{(A)}) \leq G$ is necessarily commutative, and therefore without loss of generality we can assume that $G$ is commutative.  It may seem that it is too strong of an assumption to require a group structure on $G$, or equivalently, that (multisets of) elements can be removed as well as added.  In fact, provided that $+_G$ is lossless, in that $a +_G b = a +_G c$ implies $b = c$, there is no loss of generality.  We show in \cref{sec:monoid-to-group} that we can construct a group that supports (efficient) incremental removals based only on (efficient) incremental additions.

Since the set of singleton subsets of $A$ generates the group $(\Z^{(A)}, +)$, $H$ can conversely be uniquely defined by $\hat{H}$:
\[ H(M) = \sum_{a \in A} M(a) \cdot \hat{H}(a). \]
Indeed, this is precisely the \emph{randomize-then-combine paradigm}
proposed by Bellare and Micciancio~\cite{bellare1997new} for incremental
hashing of messages, which is readily (in fact, more naturally than to
message hashing) applied by Clarke~{et al.}~\cite{clarke2003incremental} to multiset hashing.  Our goal is to minimize the computational cost for computing $H$ and the representation size for elements of $G$ while achieving a given level of collision resistance.

\subsubsection*{Collision resistance}

A \emph{collision} for a hash function $H$ is a pair $x, x'$ such that $x \not= x'$ but $H(x) = H(x')$.  For any group-homomorphic hash function $H$ from a group $(X, +)$ to $(G, +)$, a collision can equivalently be defined as a value $x \in \ker H \setminus \Set{0_X}$.   By the birthday bound that applies to any hash function, a collision can be found with at most expected $O(\sqrt{\card{G}})$ hash computations; we can hope to design a multiset hash function for which expected time $\Omega(\sqrt{\card{G}})$ is also a lower bound.\footnote{In this and the other collision bounds that follow, it is assumed that the expectations are taken over a random choice of hash function $H$ and group $(G, +_G)$ from some hash function family (distribution) $\mathcal{H}$.}

A \emph{preimage attack} seeks to invert the hash function, namely to find a value $x$ such that $H(x) = y$, for a random element $y$ in the image of $H$.  We can hope to design a multiset hash function for which the expected time complexity of the best preimage attack is also equal to the generic upper bound $\Omega(\card{G})$.  Note that for a homomorphic hash function we do not consider preimage attacks on the identity element $0_G$, since its preimage is fixed.

A \emph{second preimage attack} seeks to find a value $x'$ such that $H(x') = H(x)$, for some known value $x$.  Since a second preimage implies a collision, the time complexity of a second preimage attack is lower bounded by the time complexity of the best  collision attack, ideally $\Omega(\sqrt{\card{G}})$.  For a general, non-homomorphic hash function,  we can hope that the best attack has expected $\Omega(\card{G})$ time complexity.  For any homomorphic hash function, however, the group structure implies that a second preimage attack is no harder than a collision attack (with expected time complexity upper-bounded by $O(\sqrt{\card{G}})$).
% To see this, note that if $H$ is a homomorphic hash function from a group $(X, +)$ to $(G, +_G)$, then given any collision, i.e.\ $x_1 \in \ker H \setminus \Set{0_G}$, then for any $x \in X$, we have $x' = x + x_1 \not= x$ but $H(x') = H(x) +_G H(x_1) = H(x) + 0_G = H(x)$.

\section{Generic multiset hash families}

A random oracle $\hat{H} \colon A \rightarrow G$ clearly achieves the
optimal preimage resistance of $\Theta(\card{G})$ and the optimal
collision resistance of $\Theta(\sqrt{\card{G}})$, in the sense that at
least this many oracle queries are needed to compute preimages and
collisions respectively.

It does not follow, however, that the associated multiset hash function
$H = H_G\colon \Z^{(A)} \rightarrow G$ has the same security level; for
example, if we choose $G=\Z_2^n$, then $O(n)$ oracle queries, instead of
$\Omega(2^n)$, are enough to find arbitrary preimages in polynomial time
by solving a simple $n\times n$ linear system over $\Z_2$. However,
Bellare and Micciancio~\cite{bellare1997new} have shown (in the set hash setting, but
this generalizes naturally to multisets) how to obtain a security
reduction for $H_G$ based on a computational hardness assumption on the
group $G$. For concrete choices of $G$, that hardness assumption is
related to standard number theoretic problems, such as the discrete
logarithm problem or modular knapsacks.

When $G=\Z_p^\times$, the resulting multiset hash function $H_G$ is
essentially \msetmuhash{}~\cite{clarke2003incremental}, the multiset
variant of \muhash{}~\cite{bellare1997new}. When $G=\Z_m^n$, we
essentially obtain \msetvaddhash{}~\cite{clarke2003incremental}, the
multiset variant of \lthash{} (for $n > 1$) or \adhash{} (for $n =
1$)~\cite{bellare1997new}. These functions all have security reductions
in the framework sketched above.

It is relatively easy to find plausible concrete instantiations of the
random oracle $\hat H$ to a group like $\Z_2^n$, but for more general groups, this is
usually more complicated, and as a result it is often convenient to
replace $\hat H$ by a \emph{pseudo-random oracle}, i.e.\ a construction
that is indifferentiable from a random oracle in the sense of
Maurer~{et al.}~\cite{maurer2004indifferentiability}. Typically, we can take
$\hat H$ of the form $\hat H(a) = f\big(h(a)\big)$ where $h\colon A\to X$
is a random oracle to some intermediate set $X$ (such as bit strings, so
that we can plausibly instantiate it with standard hash function
constructions like SHA-2\footnote{We will assume that elements of $A$ can
be readily encoded as octet strings.}) and $f\colon X \rightarrow G$ is
an \emph{admissible encoding
function}~\cite{boneh2001identity,brier2010efficient} that has the
property of mapping the uniform distribution over $X$ to a distribution
indistinguishable from uniform over $G$.

\subsubsection*{Security bounds}
\label{sec:generalized-multiset-hash-security-bounds}

\adhash{} is appealing for its simplicity, but is far from optimal in terms of hash code size.  In the \emph{set} hashing setting (i.e.\ $M(a) \in \Set{0,1}$), the best known attack is the generalized birthday attack~\cite{wagner2002generalized}; under the assumption that this attack is optimal, the group $\Z_{2^n}$ corresponds to a security level of roughly $2 \sqrt{n}$ bits.  In the \emph{multiset} hashing setting, \adhash{} is completely impractical due to the extremely large hash code sizes $n$ required to defeat lattice reduction attacks described in \cref{sec:adhash-attack}.

There are reductions from computing discrete logarithms in a group $G$ to
finding collisions in the corresponding random oracle multiset hash
function
$H_{G}$~\cite{impagliazzo1996efficient,bellare1997new,clarke2003incremental}.
These reductions can be used to prove a collision resistance property for
the generic multiset hash family over any group in which computing
discrete logarithms is hard, such as $\Z_p^\times$.  However, because
discrete logarithms in $\Z_p^\times$  can be solved by e.g.\ the Number
Field Sieve with (heuristic) subexponential time complexity $L_p\big[1/3,
\sqrt[3]{64/9}\big]$~\cite[p.~128]{menezes2010handbook}, it is usually
estimated that we need to choose $p$ of around $3200$ bits for $128$-bit
security (see for example the evaluation of the ECRYPT~II report on key
sizes~\cite{smart2010ecryptkeysizes}).  In contrast, in a generic group, discrete logarithms
cannot be computed faster than expected time $\Theta(\sqrt{\card{G}})$,
which is also the optimal collision resistance.

 % (in the case of \msetmuhash{}), or an elliptic curve group $E(\F_q)$ (\cref{sec:elliptic-curve-multiset-hash}).

% If $(G, +_G)$ is a cyclic group, there is a reduction from finding collisions in $\mathcal{H}_{G,+_G}$ to computing discrete logarithms in $(G, +_G)$~\cite{impagliazzo1996efficient,bellare1997new,clarke2003incremental}, which proves the security of \msetmuhash{}.  However, because discrete logarithms can be computed in $\Z_p^\times$ much faster than expected time $\Theta(\sqrt{\card{G}})$ required in a generic group, which is also the optimal collision resistance, \msetmuhash{} is sub-optimal in that it requires that $\log_2 p$ be significantly larger than $n$ to achieve collision resistance of $2^{n/2}$.  Concretely, $\log_2 p$ must be about $2^{3072}$ for collision resistance of $2^{128}$.

\section{Elliptic Curve Multiset Hash}
\label{sec:elliptic-curve-multiset-hash}

For properly chosen elliptic curves over finite fields,
there are no known algorithms for solving the discrete logarithm problem in the elliptic curve group faster than in a generic group, i.e.\ expected time $\Theta(\sqrt{\card{G}})$.  Therefore, there is a clear possibility for using an elliptic curve group to obtain a given level of collision resistance with a much lower group size than with \msetmuhash{}.

% FIXME: insert plot of hash code size vs security level for elliptic curve hash, adhash and muhash
% FIXME: discuss adhash security

Applying the generic multiset hash construction to elliptic curve groups
presents a problem, however: while it is easy to define a very efficient
admissible encoding from $\{0,1\}^k$ to $\Z_p^\times$ for sufficiently
large $k$, an admissible encoding to an elliptic curve group is not so
easily defined.  While constructions for admissible encoding functions
have been demonstrated~\cite{brier2010efficient}, their computational
cost is higher than we would like.

\subsection{Generalized discrete logarithm security reduction}

In fact, we can significantly relax the requirement on the encoding
function $f$ and still obtain a very tight reduction, due to random
self-reducibility of the discrete logarithm problem.  Our relaxed
requirement is related to the definition of $\alpha$-weak encodings by
Brier~{et al.}~\cite{brier2010efficient}, and is satisfied in practice by a large class of encoding functions~\cite{brier2010efficient}.
\begin{definition}
  A function $f \colon S \rightarrow R$ between finite sets is said to be an $(\alpha, \beta)$-weak encoding, for integer $\alpha \geq 1$ and real value $\beta \geq 1$, if it satisfies the following properties:
\begin{enumerate}
\item \emph{Samplable}: there is an efficient randomized algorithm for computing $\card{f^{-1}(r)}$ and sampling uniformly from $f^{-1}(r)$ for any $r \in R$.
\item $\card{f^{-1}(r)} \leq \alpha$ for all $r \in R$.
\item $\Expect_r[ \card { f^{-1}(r) } / \alpha ] \geq 1 / \beta$.
\end{enumerate}
\end{definition}
An $(\alpha, \beta)$-weak encoding function $f$ allows us to efficiently
sample $s \in S$ uniformly at random using $\beta$ uniform samples $r \in
R$ in expectation, with the property that $f(s) = r$ for any accepted
sample $s$ obtained from $r$.\footnote{Under the definition of Brier~{et
al.}~\cite{brier2010efficient}, an $\alpha$-weak encoding $f$ is an $(\alpha \card{S} /  \card{R}, \alpha^2 \card{S} / \card{R})$-weak encoding.  Our definition allows for a tighter bound to be given in \cref{thm:dlpreduction}.}

\begin{definition}
\label{def:alpha-beta-weak-multiset-hash-family}
Let $f \colon X \rightarrow G$ be an $(\alpha, \beta)$-weak encoding from
$X$ to the abelian group $G$. Assume that $G$ admits as a direct factor a cyclic
subgroup $\langle g \rangle$ of prime order
$\rho$, and that we can efficiently sample from the
complement group $\overline{\langle g \rangle}$ in the direct factor
decomposition $H=\langle g\rangle \oplus \overline{\langle g \rangle}$. Given a random oracle
$h\colon A\to X$, we denote by $\hat H_f$ the function $A\to G$ given by $\hat
H_f(a) = f\big(h(a)\big)$, and by $H_f\colon \Z^{(A)}\to G$ the
associated multiset hash function.
\end{definition}

The following theorem shows that finding a collision in $H_f$
with multiplicities up to $\rho - 1$ is as hard as computing discrete
logarithms to the base $g$, up to a small factor that depends on $\beta$.
Note that $H_f$ does not depend on the choice of subgroup
$\langle g \rangle$, but the strongest security result is obtained by
choosing the largest prime-order subgroup.  The requirement of efficient
samplability of $\overline{\langle g \rangle}$ is easily satisfied in
practice, since efficiency concerns regarding representation size dictate
that $\overline{\langle g \rangle}$ be as small as possible (usually
having at most $8$ elements, and most of the time only $1$ or $2$).

\begin{restatable}{theorem}{dlpreduction}
\label{thm:dlpreduction}
Let $H_f$ be a multiset hash function as in
\cref{def:alpha-beta-weak-multiset-hash-family}.  Given an algorithm
$\mathcal{C}$ with access to the underlying random oracle $h$ that finds
a non-empty multiset $M \in \ker H_f$ with $\lvert M \rvert_\infty <
\rho$, in expected time $t'$ with probability $\epsilon'$
using $q$ queries to $h$, discrete logarithms to the
base $g$ can be computed with probability $\epsilon = \epsilon'/2$ in
expected time $t + T_1 + q T_2 + q \beta T_3 + L T_4$, where $L \geq
\lvert M \rvert_0$ is a bound on the length of the output of
$\mathcal{C}$; $T_1, \dotsc, T_4$ denote the time required for a constant
number of group operations, and are given in the proof.
\end{restatable}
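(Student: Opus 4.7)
The plan is to give a standard reduction in the spirit of~\cite{bellare1997new,clarke2003incremental}: given a discrete-logarithm challenge $y \in \langle g\rangle$, I simulate the random oracle $h$ for the collision finder $\mathcal{C}$ in such a way that every answered image is implicitly of the form $\hat H_f(a) = s_a\,g + t_a\,y + u_a$, where $s_a, t_a$ are chosen uniformly in $\mathbb{Z}/\rho\mathbb{Z}$ and $u_a$ is chosen uniformly in $\overline{\langle g\rangle}$ by the reduction, while the simulated $h(a)$ is nevertheless uniform in $X$, as it would be under a true random oracle. A collision output by $\mathcal{C}$ then yields a linear relation modulo $\rho$ from which $\log_g y$ can be read off.

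To program the oracle, for each fresh query $a$ I sample $s_a, t_a$ uniformly modulo $\rho$ and $u_a$ uniformly in $\overline{\langle g\rangle}$ (efficient by hypothesis), and form the target $r = s_a g + t_a y + u_a \in G$. Using the samplability property of $f$, I compute $\lvert f^{-1}(r)\rvert$ and draw $x \in f^{-1}(r)$, then accept with probability $\lvert f^{-1}(r)\rvert/\alpha$; on rejection I discard the triple and retry. A direct calculation shows that each particular $x \in X$ is output per trial with probability $1/(\lvert G\rvert\,\alpha)$, so conditional on acceptance $x$ is uniform over $X$ and the simulation is perfectly indistinguishable from a true random oracle. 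The third weak-encoding axiom says the acceptance probability per trial is at least $1/\beta$, so each oracle query costs $\beta$ trials in expectation, producing the $q\beta T_3$ term in the stated bound.

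With probability $\epsilon'$ the adversary then outputs a non-empty $M \in \ker H_f$ with $\lvert M\rvert_\infty < \rho$. Expanding $0 = H_f(M) = \sum_a M(a)\bigl(s_a g + t_a y + u_a\bigr)$ and projecting onto the prime-order factor $\langle g\rangle$ via the direct-sum decomposition $G = \langle g\rangle \oplus \overline{\langle g\rangle}$ yields $S\,g + T\,y = 0$ in $\langle g\rangle$, with $S \equiv \sum_a M(a)s_a$ and $T \equiv \sum_a M(a)t_a \pmod{\rho}$. Whenever $T \not\equiv 0 \pmod\rho$, the reduction returns $-S\,T^{-1} \bmod \rho$, which equals $\log_g y$.

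The main obstacle is lower-bounding $\Pr[T \not\equiv 0 \pmod \rho]$ by $1/2$, which is what forces the factor $\tfrac12$ in $\epsilon = \epsilon'/2$. Conditioned on the adversary's transcript, the only information released about each $t_a$ is the fixed value of $s_a + t_a \log_g y \bmod \rho$ embedded in the programmed hash; the marginal distribution of every $t_a$ remains uniform modulo $\rho$. Since $M$ is non-empty and $\lvert M\rvert_\infty < \rho$, some $M(a)$ is a non-zero integer of absolute value strictly less than the prime $\rho$, hence a unit modulo $\rho$, so the linear form $T$ is itself uniformly distributed modulo $\rho$ and vanishes with probability $1/\rho \le 1/2$. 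The running-time accounting then just sums the one-shot setup ($T_1$), the per-query bookkeeping ($T_2$), the per-trial cost of one weak-encoding sampling step ($T_3$), and the per-element cost of forming $S$ and $T$ from the output of length at most $L$ ($T_4$), giving the advertised bound.
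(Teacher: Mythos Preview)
Your proposal is correct and follows essentially the same strategy as the paper: program the random oracle via rejection sampling through the $(\alpha,\beta)$-weak encoding so that each answered point has a known representation in terms of the base and the challenge, then extract a linear relation modulo $\rho$ from the collision. The only noteworthy difference is in the ``hidden randomness'' step: the paper programs each point as $r_i Q + d_i P + J_i$ with $r_i\in\Z_\rho$ and an auxiliary bit $d_i\in\{0,1\}$, and argues that $d_i$ remains uniform conditioned on the transcript to get $\Pr[\text{fail}]\le 1/2$, whereas you take both coefficients $s_a,t_a$ uniform in $\Z_\rho$ and obtain the (tighter) $\Pr[T\equiv 0]=1/\rho$ directly---both suffice for the stated bound, and your parametrization is arguably the more standard and slightly cleaner one.
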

\begin{proof}
See Appendix~\ref{sec:proofthmdlpreduction}.
\end{proof}

Concretely, if $G = E(\F_{p^m})$ is the group of $\F_{p^m}$-rational
points on a suitable elliptic curve $E$ chosen to avoid any discrete
logarithm weaknesses, with a subgroup $\langle g \rangle$ of prime order
$\rho \geq \card{G}/4$, and $f$ is an $(\alpha, \beta)$-weak encoding
function with small constant $\beta$, then $H_f$ has collision
resistance roughly $p^{m/2}/2$.  Since an element of $E(\F_{p^m})$ can be
represented using $\lceil \log_2 p^m \rceil$ bits, the collision
resistance of $H_f$ is essentially optimal (to within a few
bits).

\subsection{Shallue-van de Woestijne (SW) encoding in characteristic 2}

The Shallue-van de Woestijne (SW) algorithm for characteristic 2 fields~\cite{shallue2006construction} can be used to map any point $w \in \F_{2^m}$ to a pair $(x,y) \in \left(\F_{2^n}\right)^2$ satisfying an arbitrary elliptic curve equation
\begin{equation}
  E_{a,b} \colon y^2 + x \cdot y = x^3 + a \cdot x^2 + b,
  \label{eq:elliptic-curve}
\end{equation}
where $a, b \in \F_{2^m}$.  It constructs three values of $x$ from $w$ with the property that at least one necessarily has a corresponding value $y$ satisfying \cref{eq:elliptic-curve}.  In addition to the usual arithmetic operations over $\F_{2^m}$, its definition depends on three \emph{linear} maps:
\begin{enumerate}
\item the \emph{trace} function $\mathrm{Tr} \colon \F_{2^m} \rightarrow \F_2$ defined by $\mathrm{Tr}(x) = \sum_{i=0}^{m-1} x^{2^i}$;~\cite[p.~130]{hankerson2004guide}
\item a \emph{quadratic solver} function $\mathrm{QS} \colon \Set{ x \in \F_{2^m} \given \mathrm{Tr}(x) = 0 } \rightarrow \F_{2^m}$ that satisfies $\mathrm{QS}(x)^2 + \mathrm{QS}(x) = x$ and $\mathrm{QS}(0) = 0$;
\item $\mathrm{coeff}_0 \colon \F_{2^m} \rightarrow \F_2$ where $\mathrm{coeff}_0(x)$ is the zeroth coefficient of any (fixed) polynomial representation of $x$.
\end{enumerate}

An optimized version of the algorithm that requires only a single field inversion~\cite{aranha2014binary} is shown as \cref{alg:sw-encode}.  The algorithm is parameterized by a value $t \in \F_{2^m}$ satisfying $t^4 + t \not= 0$; for fields of degree $m > 4$, we can choose $t = z$ where $z$ is the indeterminate in the polynomial representation of $\F_{2^m}$.  The result $(x, \lambda) = (x, x + y / x)$ is represented in $\lambda$-affine coordinates~\cite{oliveira2014two} for efficiency.\footnote{There is exactly one point with $x = 0$ satisfying \cref{eq:elliptic-curve}: $(x = 0, y = \sqrt{b})$.  When using $\lambda$-affine coordinates, this value must be represented specially.}  The addition to $\lambda$ of $\mathrm{coeff}_0(w)$ in \cref{alg:sw-encode:coeff-line} is not part of the original SW algorithm; this trivial addition serves to halve the number of collisions at essentially no extra cost.

\begin{algorithm*}
  \caption{Optimized Shallue--van de Woestijne encoding in characteristic 2~\cite{aranha2014binary}}
  \label{alg:sw-encode}
  \begin{algorithmic}[1]
    \Require $t \in \F_{2^m}$ such that $t \cdot (t + 1) \cdot (t^2 + t + 1) = t^4 + t \not= 0$
    \item[\textbf{Precompute:}] $t_1 = \frac{t}{t^2 + t + 1}$, $t_2 = \frac{1 + t}{t^2 + t + 1}, t_3 = \frac{t \cdot (1 + t)}{t^2 + t + 1}; t_j^{-1} = 1 / t_j$ for $j = 1, 2, 3$
    \Function{SWChar2}{$w \in \F_{2^m}$ ; $a, b \in \F_{2^m}$}
      \State $c\gets w^2 + w + a$
      \If{$c = 0$} \Comment{This condition may hold only if $\mathrm{Tr}(a) = 0$}
        \State \Return $(x = 0, y = \sqrt{b})$ \Comment{This is the single point satisfying \cref{eq:elliptic-curve} with $x = 0$}
      \EndIf
      \State $c^{-1} \gets 1 / c$
      \For{$j = 1$ to $3$}
        \State $x \gets t_j \cdot c$
        \State $x^{-1} \gets t_j^{-1} \cdot c^{-1}$
        \State $h_j \gets (x^{-1})^2 \cdot b + x + a$
        \If{$\mathrm{Tr}(h_j) = 0$} \Comment{This condition necessarily holds for at least one $j$}
          \State $\lambda \gets \mathrm{QS}(h_j) + x + \mathrm{coeff}_0(w)$  \Comment{$c$ does not depend on $\mathrm{coeff}_0(w)$}
          \label{alg:sw-encode:coeff-line}
          \State \Return $(x, \lambda)$ \Comment{$y = (\lambda + x) \cdot x = \mathrm{QS}(h_j) \cdot x$}
        \EndIf
      \EndFor
    \EndFunction
  \end{algorithmic}
\end{algorithm*}

It is clear from the definition that the number of preimages of any point
$(x, \lambda)$ under \textproc{SWChar2} is at most $\alpha = 3$, since $c
\in \Set{ t_j^{-1} \cdot x \given j = 1, 2, 3 }$ and $w$ is uniquely
determined from $c$, $x$, and $\lambda$ by
\begin{align*}
  w &\in \Set{ \mathrm{QS}(c - a), \mathrm{QS}(c - a) + 1 }, \\
  \mathrm{coeff}_0(w) &= \lambda + x + \mathrm{QS}(x^{-2} \cdot b + x + a).
\end{align*}
The preimage set for any point $(x, \lambda)$ can be efficiently computed
by these same formulas.  Furthermore, Aranha~{et al.}~\cite{aranha2014binary}
show that the proportion of curve points with $k$ preimages under $\textproc{SWChar2}$ for $k =
0,1,2,3$ is $9/32$, $15/32$, $7/32$, and $1/32$, respectively, up to an
error term of $O(2^{-n/2})$.  It follows that
$\Expect_P[\card{\textproc{SWChar2}^{-1}(P)} / \alpha] = 1/3 \pm
O(2^{-n/2})$, and therefore, \textsc{SWChar2} is an $(\alpha,
\beta)$-weak encoding with $\beta = 3 + O(2^{-n/2})$.

\subsection{Hash function definition}

Based on this encoding function, we define the \emph{elliptic curve multiset hash
(ECMH)}: given a binary elliptic curve group $E_{a,b}(\F_{2^m})$ and an
intermediate hash function $h \colon A \to \Z_{2^m}$ (modeled as a random
oracle), we define $\mathrm{ECMH}_{a,b,h}(x) = \textsc{SWChar2}(h(x); a, b)$.  Commonly used elliptic curves over $\F_{2^m}$, including the NIST-recommended ones,
% FIXME: maybe cite FIPS 186-3 ?
have a generator of prime order $\rho > 2^{m-2}$ with an easily determined complement group of size $h = \card{\overline{\langle g \rangle}} \leq 4$.  Thus, the samplability requirement on $\overline{\langle g \rangle}$ is easily satisfied in practice.  Hence, by \cref{thm:dlpreduction}, finding a collision in ECMH is as hard (up to a small constant factor) as computing discrete logarithms to the base $g$, which we assume to be $O(2^{m/2})$.

Similar suitable encoding algorithms exist for elliptic curves over fields of characteristic $p > 2$~\cite{shallue2006construction,brier2010efficient}, and could also be used to define an elliptic curve multiset hash.  However, the use of a characteristic 2 field eliminates the need for an expensive field exponentiation in order to solve a quadratic equation, which would otherwise dominate the computation time, and on modern CPUs that support fast carry-less multiplication, fast implementations of all other required field operations are also possible for characteristic $2$~\cite{taverne2011jcen}.

\subsection{Compressed representation of curve points}
\label{sec:compressed-point-representation}

The group of $\F_{2^m}$-rational points on an elliptic curve $E_{a,b}$ has order $\card{E_{a,b}(\F_{2^m})} \approx 2^m$.  Each point is naturally represented as a pair $(x,y) \in \F_{2^m}^2$ (or $(x, \lambda) \in \F_{2^m}^2$), but there is a well-known method for encoding a point using just $m + 1$ bits: given $x$ there are at most two possible values for $y$ (or $\lambda$) if $(x,y)$ (or $(x,\lambda)$) satisfy $E_{a,b}$, and they can be recovered efficiently using a small number of field operations.  Thus, a point can be encoded by its $x$ value and a single additional bit to disambiguate the two possible points.  The elliptic curve group identity element (the point at infinity) can be encoded specially without increasing the representation size, by using a bit sequence that would not otherwise encode a valid point.
% FIXME: cite something for compressed representation method

\section{Implementation}

We developed an optimized implementation of elliptic curve multiset hash (ECMH) as an open-source C++ library~\cite{maitinshepard2015ecmhLibrary},
% FIXME: maybe cite website for open source code
with support for all NIST-recommended binary elliptic curves~\cite{fips186-4} and the record-breaking GLS254 curve~\cite{oliveira2014two}, as well as several other SEC~2-recommended curves~\cite{sec2-1}.  Using a combination of C++ templates and code generation, we were able to write generic code to support many different configurations without sacrificing runtime performance; only for modular reduction was a custom implementation required for each supported field.  We incorporated existing fast x86/x86-64 polynomial multiplication, squaring, and modular reduction routines for $\F_{2^{163}}$, $\F_{2^{193}}$, $\F_{2^{233}}$, $\F_{2^{239}}$, $\F_{2^{283}}$, $\F_{2^{409}}$, $\F_{2^{571}}$~\cite{bluhm2013fast} and for $\F_{2^{127}}$~\cite{oliveira2014two}.

We implemented field inversion using a polynomial-basis Itoh--Tsujii inversion method making use of multi-squaring tables~\cite{guajardo2002itoh,taverne2011jcen,oliveira2014two,aranha2014binary}.  We generated field inversion routines for each field degree automatically based on an A\textsuperscript{*} search procedure for computing the optimal Itoh--Tsujii addition chain and set of multi-squaring tables, based on a machine-specific cost model estimated from field operation performance measurements~\cite{cryptoeprint:2015:028}.

We also developed optimized implementations of the \msetmuhash{} and
\msetaddhash{} hash functions, based on the modular arithmetic functions in the OpenSSL library version 1.0.1i, for the purpose of comparison.

\subsection{Intermediate hash function}

ECMH requires an intermediate hash function $h \colon A \to \Z_{2^m}$.
Under our assumption that the base set $A$ is the set of octet strings,
we simply require a standard cryptographic hash function (modeled as a
random oracle) with output size $m$.  Given the inherent property of any homomorphic hash function that a single collision leads to arbitrary second preimages, we advise using a keyed hash function when possible to minimize risk.

Any standard hash function with fixed output size greater than $m$ bits can simply be truncated to $m$ bits.  Standard expansion techniques can be used to efficiently generate an arbitrary length $m$ output from a hash function with fixed output size $b < m$.  Sponge constructions, such as Keccak~\cite{bertoni2009keccak}, are particularly convenient since they support arbitrary output sizes.

Both \adhash{} and \muhash{} similarly require intermediate hash functions, but with much larger output sizes $m$ for equivalent security levels.

We designed our implementation to support arbitrary hash functions, but for our performance evaluation, we selected BLAKE2~\cite{aumasson2013blake2} because of its state-of-the-art performance.  For $m \leq 256$, we used the BLAKE2s variant (256-bit output), truncating the output to $m$ bits.  For $256 < m \leq 512$, we used the BLAKE2b variant (512-bit output) with truncation.  For $m > 512$, we used BLAKE2b repeatedly to generate sufficient output, in such a way that the underlying compression function is called a minimum number of times.

\subsection{Linear field operations}
\label{sec:impl:linear-field-operations}

Several key operations for $\F_{2^m}$, such as squaring, multi-squaring ($x \mapsto x^{2^i}$), square root, and half-trace, are linear in the coefficients.  For multi-squaring (useful for inversion) and half-trace, an implementation based on a lookup table can be significantly faster than direct computation~\cite{bos2010ecc2k,taverne2011jcen,oliveira2014two,aranha2014binary}.  The coefficients are split into $\lceil m / \beta \rceil$ blocks of $\beta$ bits, and a separate table of $2^{\beta}$ entries is precomputed for each block position, using a total of $\tablememory{m,\beta} = \lceil m / \beta \rceil \cdot 2^{\beta} \cdot \lceil m / W \rceil \cdot W / 8$ bytes of memory, where $W$ is the word size in bits.  The linear transform can then be computed from the precomputed tables with $k = \lceil m / \beta \cdot \lceil m / W \rceil$ memory accesses and $k - 1$ XOR operations.

\subsection{Blinding for side-channel resistance}

The fastest implementation of ECMH is susceptible to timing and cache side-channel attacks, due to the use of lookup tables (for inversion and QS), and the use of branching (for \textsc{SWChar2}).  A branch-free implementation of \textsc{SWChar2} adds only a few additional multiplications and squarings.  Lookup tables are unavoidable for good performance, but we can blind inversion at a cost of just two multiplications and generation of one random field element.  We likewise can blind QS at a cost of 1 squaring, 2 additions, and generation of one random field element, as well as a few bit operations to ensure the random element is in the image of QS.  In this way we can fully protect against timing and cache side-channel attacks at only a small additional cost.

\subsection{Quadratic extension field}

For even $m$, representing $\F_{2^m}$ as a quadratic extension of $\F_{2^{m/2}}$ results in significantly faster field operations relative to an odd-degree field of roughly the same size: inversion in the extension field requires only one inversion in the base field (effectively reducing the memory and computation costs by nearly a factor of 4 for a table-based multi-squaring implementation), and half-trace requires only 2 half-trace computations in the base field (reducing, for a table-based implementation, the computation cost by a factor of 2 and the memory requirement by a factor of 4)~\cite{oliveira2014two}.  We use this representation to support the GLS254 elliptic curve over $\F_{2^{254}}$~\cite{oliveira2014two}.

% FIXME: maybe reference field operations benchmark table

\subsection{In-memory representation of elliptic curve points}

Although an element in the elliptic curve group of points $E_{a,b}(\F_{2^m})$ can be represented directly using the standard affine $(x,y)$-representation or the $\lambda$-affine $(x,\lambda)$ representation, and more compactly using just $m+1$ bits as described in \cref{sec:compressed-point-representation}, we can more efficiently perform group operations using the $\lambda$-projective representation $(\tilde{x},\tilde{\lambda},z)$ corresponding to the $\lambda$-affine representation $(x = \tilde{x}/z, \lambda = \tilde{\lambda}/z)$: This representation allows point addition and point doubling to be performed without any field inversions~\cite{oliveira2014two}.

% FIXME: maybe reference elliptic curve point operations benchmark table

\subsection{Batch \textsc{SWChar2} computation}
\label{sec:batch-swchar2}
A large fraction of the computational cost of our elliptic curve multiset hash construction is due to the single field inversion required by the \textsc{SWChar2} encoding function.  Using Montgomery's trick, $n$ independent elements can be inverted simultaneously at the cost of just 1 field inversion and $3(n-1)$ field multiplications~\cite{shacham2001improving}.  Since field inversion is much more than 3 times as expensive as field multiplication, this provides significant computational savings.

% FIXME maybe describe batch algorithm in greater detail
% FIXME reference field operation benchmark table

\subsection{Montgomery domain for \msetmuhash{}}
\label{sec:muhash-montgomery}
A key cost in a na\"{\i}ve implementation of \msetmuhash{} is the reduction modulo $p$ required by multiplication in $\Z_p^\times$.  To avoid this cost, we can use the \emph{Montgomery reduction}~\cite{montgomery1985modular} defined by
\ifcompj
\begin{multline*}
  \mathrm{Redc}(t ; p, r) = t \cdot r^{-1} \bmod p, \\ \text{$0 \leq t < p \cdot r$, $r > p$, $\gcd(r,p) = 1$.}
\end{multline*}
\else
\[
  \mathrm{Redc}(t ; p, r) = t \cdot r^{-1} \bmod p, \qquad \text{$0 \leq t < p \cdot r$, $r > p$, $\gcd(r,p) = 1$.}
\]
\fi
If $r$ is chosen to be a power of $2$, or a power of $2^w$, where $w$ is the word size, then the computational cost of $\mathrm{Redc}$ is significantly lower than a reduction $\bmod p$.

We represent an element $x \in \Z_p^\times$ as a triplet $(w, y, z) \in \Z_{p-1} \times \Z_P^\times \times \Z_p^\times$ corresponding to $y / z \cdot r^w \bmod p$, where $r$ is the Montgomery reduction constant.  Multiplication under this representation is defined by
\begin{align*}
(w_1,y_1,z_1)\cdot(w_2,y_2,z_2) &=
  (w_1+w_2, \mathrm{Redc}(y_1y_2), \mathrm{Redc}(z_1z_2)); \\
(w_1,y_1,z_1)\cdot(w_2,y_2,1)   &=
  (w_1+w_2+1, \mathrm{Redc}(y_1y_2), z_1); \\
(w_1,y_1,z_1)\cdot(w_2,1,z_2) &=
  (w_1+w_2-1, y_1, \mathrm{Redc}(z_1z_2)).
\end{align*}
\iffalse
\begin{alignat*}{13}
  &\big(w_1, y_1, z_1\big) \cdot \big(w_2, y_2,\, &&z_2 &&\big) &&= \big(w_1 + w_2, \,&&\mathrm{Redc}(y_1 \cdot y_2),\,&&\mathrm{Redc}(z_1 \cdot z_2)&&\big); \\
  &\big(w_1, y_1, z_1\big) \cdot \big(w_2, y_2,\, &&1&&\big) &&= \big(w_1 + w_2 + 1,\,&&\mathrm{Redc}(y_1 \cdot y_2),\,&&z_1&&\big); \\
  &\big(w_1, y_1, z_1\big) \cdot \big(w_2, 1,\, &&z_2&&\big) &&= \big(w_1 + w_2 - 1,\,&&y_1,\,&&\mathrm{Redc}(z_1 \cdot z_2)&&\big).
\end{alignat*}
\fi

\section{Performance measurement}

As our test platforms we used an Intel Westmere i7-970 \SI{3.2}{\giga\hertz} CPU (with \SI{12}{MiB} L3 cache) and an Intel Haswell i7-4790K \SI{4.0}{GHz} CPU (with \SI{8}{MiB} L3 cache).  Both of these processors support the \cpuinstruction{PCLMULQDQ} instruction for carry-less multiplication, Westmere being the first Intel architecture to support it; on the much more recent Haswell architecture, where this instruction has significantly lower cost, alternative modular reduction routines based on it are used for $\gf{163}$, $\gf{283}$, and $\gf{571}$ for a modest gain in performance~\cite{bluhm2013fast}.    Our implementation used a word size of $W = 128$ bits and a block size of $B = 8$ bits for all half trace and multi-squaring tables.  All code was compiled separately for each architecture using version 3.5 of the Clang compiler at the highest optimization level.

\subsection{Robust operation timing}
\label{sec:execution-time-measurement}

We measured the execution time of all operations in CPU cycles, using the combination of \cpuinstruction{RDTSC}, \cpuinstruction{RDTSCP}, and \cpuinstruction{CPUID} instructions recommended by Intel~\cite{paoloni2010benchmark}.  To improve accuracy and reduce variance, we disabled TurboBoost, frequency scaling, and HyperThreading, and ensured that a single non-boot CPU core was used for all benchmarks on each machine.  For each operation, we estimated the benchmarking overhead and subtracted it from the measured number of cycles.  Additionally, we automatically determined a per-measurement repeat count for each operation that ensured the benchmarking overhead was less than 10\%.

The execution time was computed as the median of the cycle measurements; the number of cycle measurements for each operation from which the median was computed was at least 1000 and chosen automatically to ensure a sufficiently small 99\% confidence interval on the median estimate (less than the larger of $\sfrac{1}{1000}$ of the estimated median or $\sfrac{1}{10}$ of a cycle).  For consistency, we ensured warm-cache conditions for all estimates by discarding the first 2000 measurements.

\subsection{Consistent measurement of memory-dependent operations}
\label{sec:memory-dependent-measurement}
For operations with data-dependent memory accesses, such as table-based multi-squaring, half trace computation, and the higher-level operations based on these primitives, we measured the aggregate execution time for a set of inputs guaranteed to induce a uniform memory access pattern (and then divided by the number of inputs), in order to obtain worst-case warm-cache estimates.  Failure to do so results in a large underestimate of execution time.

We also observed the performance characteristics of table operations to be significantly affected by the size of the virtual memory pages backing the tables; in particular, on the x86-64 test machines, both the base level performance and the scaling of execution times with increasing table size were significantly better with \SI{2}{MiB} (huge) pages than with \SI{4}{KiB} pages, due to the cost of translation lookaside buffer (TLB) misses.  The Linux transparent huge page support (introduced in Linux version 2.6.38) results in some, but not all, memory regions being backed automatically by huge pages, depending on a number of factors including region alignment and physical memory fragmentation; when not taken into account, this significantly reduced the reliability of our performance measurements.  For consistent performance, we therefore ensured that all lookup tables were backed by huge pages.

% Include:
%   field operation timings
%   string hashing timings
%   elliptic curve operation timings?
%   Encoding function timings
%   Complete elliptic curve hash timings
%   MuHash and AdHash timings
% Compare elliptic-curve, muhash and adhash in terms of hash code size per security level and cycles per security level
% Show both batch and non-batch

\section{Results}
\label{sec:performance-results}

In order to obtain performance results for a full range of security levels, we evaluated the performance of ECMH using each of the following eight elliptic curves: sect163k1~\cite{sec2-1} (NIST K-163~\cite{fips186-4}), sect193r1~\cite{sec2-1}, sect233k1~\cite{sec2-1} (NIST K-233~\cite{fips186-4}), sect239k1~\cite{sec2-1}, GLS254~\cite{oliveira2014two}, sect283k1~\cite{sec2-1} (NIST K-283~\cite{fips186-4}), sect409k1~\cite{sec2-1} (NIST K-409~\cite{fips186-4}), and sect571k1~\cite{sec2-1} (NIST K-571~\cite{fips186-4}).

\begin{figure}
  \centering
  \includegraphics{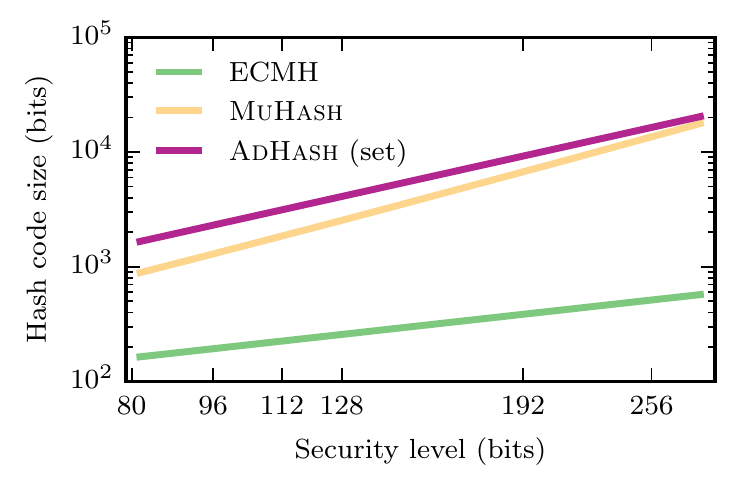}
  \caption{Security level attained as a function of hash code size for Elliptic Curve Multiset Hash (ECMH), \adhash{} (restricted to \emph{set} hashing), and \muhash{}.  For \muhash{}, the \emph{multiset} hashing security level was determined based on the conjectured time-complexity $L_p\left[1/3, \sqrt[3]{64/9}\right]$ of the Number Field Sieve for solving discrete logarithms in $\Z_p^\times$~\cite[p.~128]{menezes2010handbook}.  For \adhash{}, we determined a \emph{set} hashing security level of $2 \sqrt{n}$ corresponded to groups $\Z_{2^n}$ based on the assumption that the generalized birthday attack~\cite{wagner2002generalized} is optimal.}
  \label{fig:hash_code_size}
\end{figure}

Based on \cref{thm:dlpreduction} and the assumed hardness of the Elliptic Curve Discrete Logarithm Problem, the ECMH using an elliptic curve group of order $\rho$ has collision resistance of $O(\sqrt{\rho})$, corresponding to a security level $\log_2 \rho$ bits.  We also evaluated \muhash{} and \adhash{} (for set hashing only) using group sizes corresponding to the same range of security levels.  The correspondence between security level and hash code size under each method is shown in \cref{fig:hash_code_size}.

\begin{figure*}
  \centering
  \includegraphics{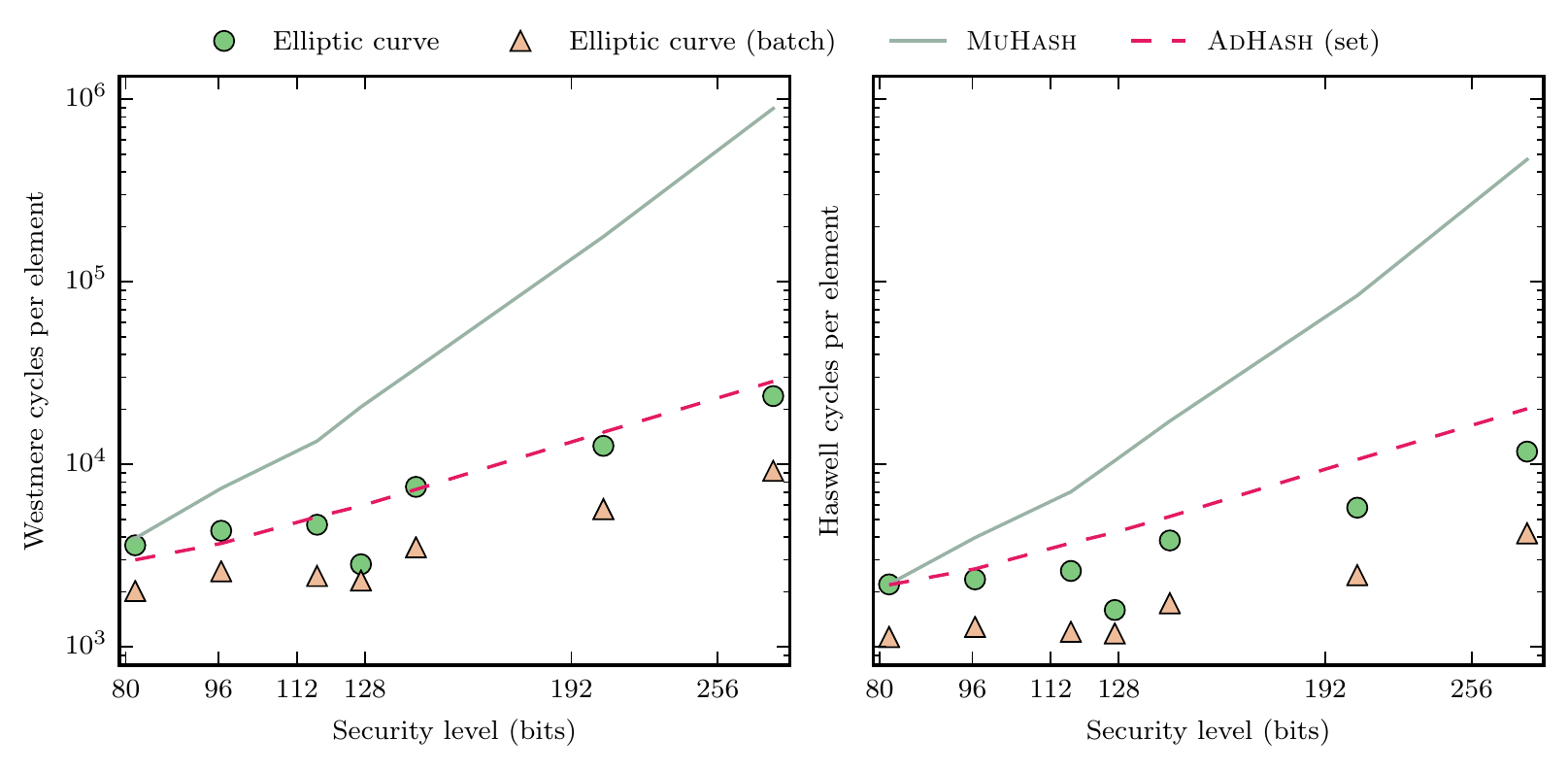}
  %{\footnotesize \input{cycle_table}}
  \caption{Comparison of multiset hashing performance at different security levels.  The elliptic curve corresponding to each security level is given in \cref{tab:multiset_hash_cycles}.  The security level for \adhash{} applies only to \emph{set} hashing, as described in \cref{sec:adhash-attack}.}
  \label{fig:multiset_hash_cycles}
\end{figure*}

For each multiset hash $H$, we measured the computational cost of incremental hash code updates corresponding to a sequence of incremental additions or removals of multiset elements, i.e.\ incrementing or decrementing by 1 the multiplicity of each element in the sequence.  Larger changes in multiplicity can also be handled efficiently by scalar multiplication in the group, but we expect incremental additions and removals to be the most common case.  We used a sequence of 1024 randomly generated 32-byte strings;\footnote{As ECMH depends on lookup tables with a block size of $B = 8$, 1024 random elements ensures high coverage of the tables and a random access pattern, in order to correctly estimate execution time, as described in \cref{sec:memory-dependent-measurement}.} longer strings would simply impose an additional cost independent of $H$.

The average cost per element reflects the cost of the intermediate hash function based on BLAKE2, the cost of encoding the expanded bit sequence as a group element, and the cost of one group operation to add the encoded element to a running total.  In the case of ECMH, the encoding is \textsc{SWChar2} and the group operation is implemented as the mixed addition of a $\lambda$-affine and a $\lambda$-projective point; batch ECMH effectively replaces 1 field inversion by 3 multiplications, as described in \cref{sec:batch-swchar2}.  In the case of \adhash{}, the encoding is trivial and the group operation is simply integer addition; batch computation would offer no advantage.  For \muhash{}, the encoding requires a comparison and at most one subtraction, and the group operation requires just a single Montgomery multiplication, as described in \cref{sec:muhash-montgomery}; batch computation would offer no advantage over the Montgomery representation already used.

The results are shown in \cref{fig:multiset_hash_cycles} and in \cref{tab:multiset_hash_cycles}.  Only element addition performance is shown, as due to the representations used, element removal performance is nearly identical.  Timings for point encoding, compression, and decompression are given in \cref{tab:elliptic_curve_operation2_cycles}.  Base field operation timings are given in \cref{tab:field_operation_cycles}, and a comparison of curve operation performance under $\lambda$-affine and $\lambda$-projective point representations is given in \cref{tab:elliptic_curve_operation_cycles}.

\begin{table*}[p]
  \caption{Comparison of multiset hashing performance, as in \cref{fig:multiset_hash_cycles}.  Note that the \adhash{} performance applies only to \emph{set} hashing.}
  \centerfloat
  \footnotesize

\begin{tabular}{llrc@{/}c@{/}c@{/}cccrc@{/}c@{/}c@{/}ccc}
%\toprule
& & & \multicolumn{6}{c}{Westmere cycles} & & \multicolumn{6}{c}{Haswell cycles}\\
\cmidrule(r){4-9}\cmidrule(r){11-16}& &  & \multicolumn{4}{c}{ECMH} &  &  &  & \multicolumn{4}{c}{ECMH} &  & \\

$n$&Curve &  & single & blind & batch & blind & \muhash{} & \adhash{} &  & single & blind & batch & blind & \muhash{} & \adhash{}\\
\midrule
81&sect163k1& & 3601&4436&2023&2418 & 3939 & 2998& & 2199&2556&1133&1349 & 2208 & 2186\\
96&sect193r1& & 4326&5444&2595&3198 & 7384 & 3687& & 2342&2755&1287&1580 & 3967 & 2674\\
116&sect233k1& & 4667&5726&2444&2933 & 13414 & 5160& & 2605&2968&1209&1495 & 7074 & 3708\\
119&sect239k1& & 5183&6361&2630&3164 & 16532 & 5117& & 3061&3474&1422&1700 & 8537 & 3684\\
127&GLS254& & 2835&3872&2307&2882 & 20631 & 5920& & 1592&1973&1184&1426 & 10472 & 4239\\
141&sect283k1& & 7524&9271&3513&4254 & 33472 & 7286& & 3828&4291&1733&2024 & 17251 & 5178\\
204&sect409k1& & 12621&16696&5686&6878 & 176767 & 14997& & 5788&6897&2473&2948 & 84027 & 10632\\
285&sect571k1& & 23654&29628&9206&10746 & 890172 & 28485& & 11745&16664&4188&4940 & 467938 & 20152\\

%\bottomrule
\end{tabular}
  \label{tab:multiset_hash_cycles}
\end{table*}

\begin{table*}[p]
  \caption{Performance of elliptic curve point encoding, compression (to a minimal-length bit string), and decompression (from said bit string).  Compression (comp.) and decompression (dec.) use $\lambda$-projective coordinates.  Batch encoding is with a batch size of $256$.}
  \centering
  \footnotesize

\begin{tabular}{lrc@{/}c@{/}c@{/}cccrc@{/}c@{/}c@{/}ccc}
%\toprule
 & & \multicolumn{6}{c}{Westmere cycles} & & \multicolumn{6}{c}{Haswell cycles}\\
\cmidrule(r){3-8}\cmidrule(r){10-15} &  & \multicolumn{4}{c}{\textsc{SWChar2}} &  &  &  & \multicolumn{4}{c}{\textsc{SWChar2}} &  & \\

Curve &  & single & blind & batch & blind & Comp. & Dec. &  & single & blind & batch & blind & Comp. & Dec.\\
\midrule
sect163k1& & 2629&3248&853&1234 & 2222 & 2268& & 1500&1854&432&641 & 1370 & 1402\\
sect193r1& & 3217&4073&1227&1821 & 2541 & 2669& & 1603&2014&548&823 & 1370 & 1443\\
sect233k1& & 3577&4340&1076&1537 & 3080 & 3154& & 1852&2214&510&719 & 1673 & 1732\\
sect239k1& & 4033&4880&1158&1670 & 3468 & 3569& & 2227&2616&570&817 & 2024 & 2045\\
GLS254& & 1671&2551&978&1566 & 1166 & 1245& & 874&1280&437&709 & 665 & 716\\
sect283k1& & 5738&6865&1587&2253 & 4772 & 5134& & 2822&3278&698&993 & 2320 & 2624\\
sect409k1& & 8612&10523&2707&3720 & 7497 & 7563& & 4395&5172&1157&1573 & 3883 & 4074\\
sect571k1& & 17968&21721&4301&5867 & 15174 & 16337& & 8987&13329&1867&2573 & 7132 & 8448\\

%\bottomrule
\end{tabular}
  \label{tab:elliptic_curve_operation2_cycles}
\end{table*}

\clearpage

\begin{table*}[p]
  \caption{Field operation performance for $\F_{2^m}$.  Batch inversion is with a batch size of $256$.}
  \centering
  \footnotesize
\begin{tabular}{lrccc@{/}c@{/}c@{/}cc@{/}crccc@{/}c@{/}c@{/}cc@{/}c}
%\toprule
 & & \multicolumn{8}{c}{Westmere cycles} & & \multicolumn{8}{c}{Haswell cycles}\\
\cmidrule(r){3-10}\cmidrule(r){12-19} &  &  &  & \multicolumn{4}{c}{Invert} & \multicolumn{2}{c}{QS} &  &  &  & \multicolumn{4}{c}{Invert} & \multicolumn{2}{c}{QS}\\

$m$ &  & Mul. & Sq. & single & blind & batch & blind & var. & blind &  & Mul. & Sq. & single & blind & batch & blind & var. & blind\\
\midrule
127& & 44 & 11 & 721&904&124&124 & 41&131& & 23 & 9 & 435&534&59&59 & 21&82\\
163& & 84 & 32 & 1807&2074&266&269 & 115&227& & 43 & 24 & 1159&1309&127&127 & 67&152\\
193& & 113 & 26 & 2210&2533&343&344 & 149&264& & 46 & 20 & 1129&1308&128&128 & 79&155\\
233& & 109 & 30 & 2745&3092&341&342 & 191&313& & 48 & 24 & 1439&1583&131&131 & 95&178\\
239& & 119 & 34 & 3139&3492&372&376 & 187&313& & 51 & 31 & 1755&1933&160&159 & 93&183\\
254& & 99 & 17 & 868&1211&310&313 & 88&245& & 38 & 15 & 514&664&111&116 & 62&158\\
283& & 148 & 36 & 4438&4869&473&474 & 420&560& & 55 & 28 & 2222&2423&175&179 & 227&296\\
409& & 274 & 35 & 7899&8688&884&867 & 807&959& & 93 & 30 & 3500&3805&291&296 & 404&478\\
571& & 431 & 65 & 16217&17808&1308&1323 & 1457&1692& & 168 & 38 & 6873&7611&464&489 & 737&842\\

%\bottomrule
\end{tabular}
  \label{tab:field_operation_cycles}
\end{table*}

\begin{table*}[p]
  \caption{Performance of elliptic curve group operations using $\lambda$-affine and $\lambda$-projective point representations.  The result of point addition or doubling is always represented in $\lambda$-projective coordinates.}
  \centering
  \footnotesize
\begin{tabular}{lrc@{/}c@{/}cc@{/}cc@{/}crc@{/}c@{/}cc@{/}cc@{/}c}
%\toprule
 & & \multicolumn{7}{c}{Westmere cycles} & & \multicolumn{7}{c}{Haswell cycles}\\
\cmidrule(r){3-9}\cmidrule(r){11-17} &  & \multicolumn{3}{c}{Add} & \multicolumn{2}{c}{Double} & \multicolumn{2}{c}{Negate} &  & \multicolumn{3}{c}{Add} & \multicolumn{2}{c}{Double} & \multicolumn{2}{c}{Negate}\\

Curve &  & aff. & mix. & full & aff. & proj. & aff. & proj. &  & aff. & mix. & full & aff. & proj. & aff. & proj.\\
\midrule
sect163k1& & 500&748&1016 & 192&472 & 12&18& & 213&305&408 & 105&188 & 6&9\\
sect193r1& & 604&952&1268 & 199&640 & 12&18& & 236&370&468 & 105&281 & 7&9\\
sect233k1& & 624&936&1276 & 202&540 & 12&18& & 235&341&462 & 107&224 & 7&10\\
sect239k1& & 684&1032&1388 & 244&620 & 12&18& & 308&453&595 & 144&311 & 6&9\\
GLS254& & 572&856&1168 & 162&488 & 9&14& & 219&308&435 & 78&196 & 5&9\\
sect283k1& & 864&1308&1792 & 280&768 & 15&23& & 329&489&655 & 138&302 & 12&19\\
sect409k1& & 1496&2320&3144 & 416&1280 & 18&29& & 542&788&1075 & 194&506 & 12&22\\
sect571k1& & 2276&3516&4772 & 644&1956 & 21&35& & 859&1253&1688 & 296&740 & 15&22\\

%\bottomrule
\end{tabular}
  \label{tab:elliptic_curve_operation_cycles}
\end{table*}

\clearpage

\section{Discussion}
\label{sec:discussion}

Elliptic curve multiset hash significantly outperforms the existing methods of
\muhash{} and \adhash{}, particularly in the batch setting, while requiring
significantly smaller hash codes at all security levels.  In fact, the hash code
size is essentially optimal.  Because the single field inversion required by the
encoding function \textsc{SWChar2} accounts for a large fraction of the
computational cost, particularly with larger field degrees, the use of
Montgomery's trick in the batch setting significantly reduces the computational
cost.  The lower computational cost at the 127-bit security level is due to the
efficiency of the GLS254 curve implementation; the quadratic extension field
representation of $\gf{254}$ employed, and the close match of the degree to the
word size $W = 128$, significantly reducing the cost of field operations.
Quadratic extension field representations for other fields, such as $\gf{502}$,
could potentially be used to obtain similar performance improvements at other
security levels. Furthermore, our choice of parameters follows the trend of
increasing native support to binary field arithmetic in Desktop
processors and will likely benefit from improvements to the carry-less
multiplication instruction in the recently released Broadwell processor family.

Our work is very related to the Encrypted Elliptic Curve Hash
(EECH)~\cite{brown2008encrypted}.  That construction also encodes
separate bit strings as points on a binary elliptic curve and then
combines those points using point addition.  Like our approach, it relies on the property of binary elliptic curves that curve points can be decoded from a non-redundant representation without expensive field exponentiations, using instead a precomputed lookup table for half-trace, and notes that better performance may be obtained using batch inversion and hardware support for carry-less multiplication.

The full EECH construction is proposed as an
incremental hash for bit strings (the message is split into fixed-size
blocks, and each block, concatenated with the block index, is encoded as
an elliptic curve point).  In contrast to our elliptic curve multiset hash, it is
specifically designed to \emph{avoid} reliance on an underlying random
oracle, relying instead on redundancy/padding in the point encoding
function for collision resistance.

While the \emph{full} construction is not well-suited to homomorphic multiset hashing\footnote{Using an elliptic curve over $\gf{m}$, under the EECH construction at most $b$ bits of input data can be encoded per point to retain collision resistance of $2^{m - b}$.  Optimal collision resistance of $2^m$ for the representation size requires that $b \leq m / 2$.  Each multiset element $a \in A$ (assumed to be a bit string) must therefore be split into one or more blocks of $b$ bits, each encoded as a separate elliptic curve point.  For elements longer than $b$ bits, this is likely to be significantly more expensive than hashing $a$ with a fast hash function like BLAKE2 and then encoding the result into a single elliptic curve point.  EECH also offers no preimage resistance by default.  There is a proposed pairing-based variant PEECH that relies on an elliptic curve pairing to define a homomorphic one-way function.  This provides preimage resistance at the cost of significantly higher computational cost and representation size.}, we can make the fairer comparison between our ECMH construction and a straightforward randomize-then-combine-style~\cite{bellare1997new} construction over binary elliptic curve groups using the implementation techniques proposed for EECH.  Such a construction was neither explicitly proposed nor implemented, and there was no prior evidence that it would be practical performance wise.  Our work goes significantly beyond this:
\begin{itemize}
\item We provide a thorough empirical analysis of performance, and demonstrate for the first time that an elliptic curve-based multiset hash actually significantly exceeds the performance of \adhash{} and \muhash{}.
\item We demonstrate that a fully blinded implementation is possible at only a minor performance penalty.  We also demonstrate batch variants of both the regular and fully-blinded implementations that are significantly faster.  In contrast, the try-and-increment encoding method proposed for EECH has no guaranteed time bound, making it unavoidably susceptible to timing attacks, and less amenable to speedup by batch inversion.
\item Our security proof is based on existing techniques~\cite{bellare1997new,brier2010efficient} but the security bound we obtain is novel in several ways:
  \begin{itemize}
  \item The hash function $\hat{H}$ into the elliptic curve group need not be indistinguishable from a random oracle, but is instead permitted to satisfy the weaker property of being an $(\alpha, \beta)$-weak encoding, which significantly reduces the computational cost.
  \item The hash function $\hat{H}$ can map to the full elliptic curve group, rather than only a cyclic subgroup, as is required by EECH.  This allows for a simpler implementation that does not rely on patent-encumbered techniques~\cite{brown2007method} for efficiently testing for subgroup membership.
  \end{itemize}
\end{itemize}

It was originally suggested~\cite{bellare1997new} that while finding collisions in \muhash{} is provably as hard as the Discrete Logarithm Problem (DLP), the converse is not necessarily true: it may be that \muhash{} is still collision resistant even if discrete logarithms can be computed efficiently.  In fact, though, by computing discrete logarithms, finding a collision in \muhash{} can be reduced to finding a collision in \adhash{}.  It would therefore be susceptible to a generalized birthday attack~\cite{wagner2002generalized} in the set hashing setting or to lattice reduction attacks in the multiset hashing setting.  The same reduction applies to our elliptic curve multiset hash, and is even more effective because of the smaller group order.

%\bibliographystyle{compj}
%\bibliography{refs}

\appendix

\section{Security reduction based on ($\alpha, \beta$)-weak encodings}
\label{sec:proofthmdlpreduction}

We prove \cref{thm:dlpreduction}, which reduces solving discrete logarithms to finding collisions in a homomorphic multiset hash function based on an $(\alpha, \beta)$-weak encoding.

\dlpreduction*
\begin{proof}
  Let a $Q \in \langle P \rangle$, for which we wish to find $n \in \Z$ such that $n \cdot P = Q$, be given.  We simulate each successive distinct query $h(a_i)$ to the random oracle $h$ for $i = 1, \ldots, k$ using the following algorithm:
  \begin{enumerate}
  \item Sample uniformly at random $r_i \in \Z_\rho$, $d_i \in \Set{0,1}$, $J_i \in \overline{\langle P \rangle}$, $j \in \Z_{\lceil \alpha \rceil}$.
  \item Compute $Q_i = r_i Q + d_i P + J_i$.  Note that since $\langle P \rangle$ has prime order, $Q$ is a generator of $\langle P \rangle$, and therefore $Q_i$ is distributed uniformly in $G$.
  \item If $j < \card{f^{-1}(Q_i)}$, sample $x_i$ from $f^{-1}(Q_i)$ uniformly at random.  Otherwise, resample $r_i, d_i, J_i$, and $j$.
  \item Return $x_i$.  Note that $x_i$ is uniformly distributed in $X$, and the expected number of sampling attempts is $\alpha / \beta$.
  \end{enumerate}

  Under the simulated $h$, $\mathcal{C}$ finds a non-empty $M \in \ker H$ in expected time $t$ with success probability $\epsilon$.  Consider the case that a collision is found.  (Otherwise, we fail to compute the discrete logarithm.)  Without loss of generality, we can assume $M$ is non-zero only for values $a_i$ on which $h$ was queried.  Thus, we have
  \begin{align*}
    0_G &= \sum_{i=1}^k M(a_i) \cdot h(a_i)
        = \sum_{i=1}^k M(a_i) \cdot Q_i \\
        &= \sum_{i=1}^k M(a_i) \cdot \left[ r_i \cdot Q + d_i \cdot P + J_i \right],
  \end{align*}
  which implies
  \begin{equation}
    \label{eq:dlpreduction:eq-full-with-subs}
    r \cdot Q + \sum_{i=1}^k J_i \cdot M(a_i) = - d \cdot P,
  \end{equation}
  where
  \begin{align*}
    r &= \sum_{i=1}^k r_i \cdot M(a_i) \bmod \rho, &&
    d = \sum_{i=1}^k d_i \cdot M(a_i) \bmod \rho.
  \end{align*}
  Since $r \cdot Q \in \langle P \rangle$ and $-d \cdot P \in \langle P \rangle$, it follows that $\sum_{i=1}^k J_i \cdot M(a_i) = O_G$ in \cref{eq:dlpreduction:eq-full-with-subs}; we therefore have $r \cdot Q = - d \cdot P$.  Since $M$ is non-empty, there exists a value $i$ such that $M(a_i) \not= 0$.  Consider that the distribution of $d_i$ conditioned on $Q_1, \ldots, Q_k$ is still uniform in $\Set{0,1}$, and therefore $\Pr(d_i = 0) = 1/2$, and hence, $\Pr(d = 0) \leq 1/2$.  If $d \not= 0$, then $r \not= 0$, and therefore $\Pr(r \not= 0) \geq 1/2$.

  If $r = 0$, we fail to compute the discrete logarithm.  Otherwise, $r$ has an inverse $r^{-1}$ in $\Z_\rho^\times$ and we have $Q = r^{-1} r Q = - r^{-1} d P$.  Thus, $n = - r^{-1} d$ is a solution to the discrete logarithm problem.  Since we only fail if $\mathcal{C}$ fails or $r = 0$, we find a solution with probability at least $\epsilon/2$.

Each query $a_i$ to the simulated random oracle requires a table lookup to check if $a_i$ has been queried previously.  If it has not, we must repeatedly sample $r_i$, $d_i$, $J_i$ and $j$ and compute $Q_i = r_i Q + d_i P + J_i$ in time
\ifcompj
\begin{multline*}
T_3 = T_{\mathrm{samp}}(\Z_\rho) + T_\mathrm{samp}(\Z_2) +\\ T_\mathrm{samp}(\overline{\langle g \rangle}) + T_{\exp(G)} + 2 T_\mathrm{mult}(G),
\end{multline*}
\else
\[
T_3 = T_{\mathrm{samp}}(\Z_\rho) + T_\mathrm{samp}(\Z_2) + T_\mathrm{samp}(\overline{\langle g \rangle}) + T_{\exp(G)} + 2 T_\mathrm{mult}(G),
\]
until $j < \lvert f^{-1}(Q_i) \rvert$, which requires $\beta$ attempts in expectation, since $f$ is an $(\alpha, \beta)$-weak encoding.  We then sample $x_i \in f^{-1}(Q_i)$.  Thus, each of the $q$ queries to the random oracle require expected time $\beta T_3 + T_2$, where
\[ T_2 = T_\mathrm{lookup} + T_\mathrm{samp}(f^{-1}). \]

We can compute $r$ and $d$ as a sum of $L$ terms in time $L \cdot T_4$, where
\[ T_4 = T_\mathrm{lookup} + 2 T_\mathrm{add}(\Z_\rho) + T_\mathrm{mult}(\Z_\rho). \]
Finally, we can compute $n$ from $r$ and $d$ in time
\[ T_1 = T_\mathrm{inv}(\Z_\rho) + T_\mathrm{mult}(\Z_\rho) + T_\mathrm{negate}(Z_\rho). \]
Thus, the total expected time is $t + T_1 + q T_2 + q \beta T_3 + L T_4$.
\end{proof}

% To simulate a random oracle query we have to:
%   sample r_i, d_i, J_i, j
%   T_sample(\Z_\rho)
%   T_scalarmult(G) + 2 * T_add(G)

\section{Security analysis of $\mbox{\adhash{}}$ in the multiset setting}
\label{sec:adhash-attack}

The best known attack on Bellare and Micciancio's incremental hash
function \adhash{} when it is used to hash \emph{sets} is Wagner's
generalized birthday attack~\cite{wagner2002generalized}. However,
when the function is used for \emph{multiset} hashing, as proposed by
Clarke~{et al.}~\cite[Theorem 6]{clarke2003incremental}, its security is
much weaker. Indeed, finding a multiset collision on \adhash{} with
$q$ random oracle queries is equivalent to finding a vector
$(a_1,\dots,a_q)\in\Z^q$ of polynomial norm such that:
\[ \sum_{i=1}^q a_i h_i \equiv 0 \pmod M, \]
where the $h_i$'s are the hash values returned by the oracle, and $M$ is
the \adhash{} modulus. In other words, the problem is to find a short
vector in the full rank lattice $L\subset\Z^q$ of vectors orthogonal to
$(h_1,\dots,h_q)$ modulo $M$.

The volume $\vol(L) = [\Z^q:L]$ of $L$ is clearly at most $M$, since $L$
is the kernel of a homomorphism to $\Z/M\Z$. Therefore, a lattice
reduction algorithm with Hermite factor constant $c$
(see~\cite{DBLP:conf/eurocrypt/GamaN08}) is expected to find a vector in
$L$ of Euclidean norm at most $c^q\cdot M^{1/q}$. By choosing $q =
\sqrt{\frac{\log M}{\log c}}$, we obtain a multiset collision of size
roughly $2\sqrt{\log_2 M\cdot\log_2 c}$ bits. For $k$ bits of security
against this multiset collision attack, it is thus necessary to choose:
\[ \log_2 M \geq \frac{k^2}{4\log_2 c}. \]
This is similar to Wagner's attack in the sense that the size of $M$
should be at least quadratic in the security parameter, but the constant
is typically much larger. Over a large range of lattice dimensions,
a security level of $k=128$ bits corresponds to a Hermite factor constant
$c\approx1.007$~\cite{chen2011bkz,DBLP:conf/ima/PolS13}.
Hence, a conservative choice of $M$ should be at least 400,000-bit long,
which is obviously impractical. Even $k=80$ corresponds to
$c\approx1.008$ and requires $M$ to be chosen larger than 100,000 bits.

At any rate, recommended sizes for the set-hash setting are highly
insecure in the multiset hash setting. Consider a modulus $M$ of $1600$
bits, appropriate for $80$-bit security in the set-hash setting. Simply
doing $q=230$ oracle queries and easily reducing the corresponding
lattice with LLL (not even BKZ!), which has a Hermite factor constant
$c\approx1.021$, yields a multiset collision of weight about
$c^{230}\cdot 2^{1600/230}\leq 15000$ (less that $14$-bit long).
Similarly, given a $4096$-bit modulus $M$ (as used for $128$-bit security
in the set-hash setting), doing $q=500$ queries and reduction the
corresponding $500$-dimensional lattice with BKZ-28\footnote{This is by
no means a large computational effort even by academic standards. Recent
academic lattice reduction records target lattices of dimension $>800$
using BKZ with block size $90$ and
up~\cite{chen2011bkz,lindner2014latticechallenge}.}, which has a Hermite
factor constant $c\approx1.011$~\cite{DBLP:conf/eurocrypt/GamaN08},
yields a multiset collision of weight about $c^{500}\cdot 2^{4096/500}\leq
70000$ (less than $17$-bit long).

\section{Group structure implied by incremental additions}
\label{sec:monoid-to-group}

Consider a more limited definition of an incremental multiset hash function, under which only incremental additions (and non-negative multiplicities) are supported:
\begin{definition}
  Let $A$ be a set, and let $T$ be a finite set with an associative
operation $+_T \colon T \times T \rightarrow T$.  A function $H \colon
\Z_{\geq 0}^A \rightarrow T$ is a \emph{monoid-homomorphic multiset hash
function} if $H(M_1 + M_2) = H(M_1) +_T H(M_2)$ for all $M_1, M_2 \in
\Z^{(A)}$.
\end{definition}

Note that $(H(\Z_{\geq 0}^A), +_T)$ is necessarily a commutative monoid under this definition.  Thus, without loss of generality, we can assume that $(T, +_T)$ is a commutative monoid.

\begin{theorem}
  If we make the additional assumption that $(T, +_T)$ has the
cancellation property, i.e.\ $a + b = a + c$ implies $b = c$ for all $a,
b, c \in T$, then we can construct a (group-)homomorphic multiset hash
function $H'$ from $\Z^{(A)}$ into a group $G$ that embeds $T$.  Furthermore, this construction has only a constant factor time and space overhead of $2$.
\end{theorem}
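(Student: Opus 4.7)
The approach is to apply the Grothendieck group construction (the group of formal differences) to $(T, +_T)$. Define $G := (T \times T)/{\sim}$, where $(a,b) \sim (c,d)$ iff $a +_T d = b +_T c$. Cancellativity of $+_T$ is exactly what is needed to make $\sim$ transitive and to ensure that coordinate-wise addition $[(a,b)] + [(c,d)] := [(a +_T c,\, b +_T d)]$ descends unambiguously to $G$; the identity is $[(0_T, 0_T)]$ (with $0_T := H(\varnothing)$) and $-[(a,b)] = [(b,a)]$. The canonical map $\iota \colon T \to G$, $a \mapsto [(a, 0_T)]$, is an injective monoid homomorphism, again by cancellativity, realizing $T$ as a submonoid of $G$.

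Next, I would define $H'$ using the canonical positive/negative split of a signed multiset. For $M \in \Z^{(A)}$, set $M^+(a) = \max(M(a), 0)$ and $M^-(a) = \max(-M(a), 0)$, so that $M^{\pm} \in \Z_{\geq 0}^A$ and $M = M^+ - M^-$; then put $H'(M) := [(H(M^+),\, H(M^-))]$. On non-negative multisets this collapses to $\iota(H(M))$, so $H'$ genuinely extends $H$ through the embedding $\iota$, and takes the full signed domain $\Z^{(A)}$ into the group $G$.

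The one step that requires verification is the homomorphism property $H'(M_1 + M_2) = H'(M_1) + H'(M_2)$. Since $M^+ - M^- = M$ holds pointwise, the non-negative multisets $(M_1 + M_2)^+ + M_1^- + M_2^-$ and $(M_1 + M_2)^- + M_1^+ + M_2^+$ are equal (both sides are a pointwise rearrangement of the decomposition of $M_1 + M_2$). Applying the monoid homomorphism $H$ and invoking the definition of $\sim$ yields $[(H((M_1+M_2)^+),\, H((M_1+M_2)^-))] = [(H(M_1^+) +_T H(M_2^+),\; H(M_1^-) +_T H(M_2^-))]$, which is exactly the required equality in $G$.

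For the overhead claim: each element of $G$ is stored as a pair of elements of $T$ (so storage doubles), and an incremental update $M \mapsto M \pm \Set{a}$ adds $\hat{H}(a)$ into exactly one of the two stored components via a single application of $+_T$, so the time cost per incremental operation is at most twice that of $H$. There is no real obstacle here; the main point to be careful about is that cancellativity is used in three distinct places (transitivity of $\sim$, well-definedness of $+$ on $\sim$-classes, and injectivity of $\iota$), so the hypothesis cannot be dropped, but beyond that the construction is entirely standard.
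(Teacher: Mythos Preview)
Your proposal is correct and follows essentially the same approach as the paper: the Grothendieck group of formal differences $G=(T\times T)/{\sim}$, the hash $H'(M)=[(H(M^+),H(M^-))]$, and the homomorphism verification via the pointwise identity $(M_1+M_2)^+ + M_1^- + M_2^- = (M_1+M_2)^- + M_1^+ + M_2^+$ all match the paper's argument. One small remark: well-definedness of the induced addition on $\sim$-classes actually does not require cancellation (only transitivity of $\sim$ and injectivity of $\iota$ do), and the paper additionally observes that a finite commutative cancellative monoid is already a group, using Grothendieck only to make inversion efficient.
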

\begin{proof}
  Since $T$ is a finite, commutative monoid with the cancellation property, there must exist an inverse for every element, and therefore $T$ is a group.
  However, to ensure that the inverse can be computed efficiently, we use the Grothendieck construction in which we represent the positive and negative parts by separate elements of $T$.

  Let $G$ be the quotient set $T \times T /\!\!\!\equiv_G$, where the equivalence relation $\equiv_G$ is given by $(a_+, a_-) \equiv_G (b_+, b_-)$ if, and only if, $a_+ + b_- = a_- + b_+$, for all $a_+, a_-, b_+, b_- \in T$.  We define the addition operation $[(a_+, a_-)] +_G [(b_+, b_-)] = [(a_+ +_T b_+, a_- +_T b_-)]$.  Note that $+_G$ respects $\equiv_G$, and the inverse is given by $-[(a_+, a_-)] = [(a_-, a_+)]$.

  We define the hash function $H' \colon \Z^{(A)} \rightarrow G$ by $H'(M) = [(H(\max(M,0)), H(\max(-M,0)))]$.
  Since $\max(-M_1,0) + \max(-M_2,0) + \max(M_1 + M_2,0) = \max(M_1,0) +
\max(M_2,0) + \max(-(M_1 + M_2),0)$ for all $M_1, M_2 \in \Z^{(A)}$, we have
$H'(M_1 + M_2) = [(H(\max(M_1 + M_2,0)), H(\max(-(M_1 + M_2))))] 
               = [(H(\max(M_1,0)), H(\max(-M_1,0)))] + [(H(\max(M_2,0)), H(\max(-M_2,0)))]
               = H'(M_1) + H'(M_2)$.

  Finally, we can embed $T$ in $G$ using that map $\phi(a) = [(a,H(\emptyset))]$ for all $a \in T$.  It follows directly from the definition of $\equiv_G$ and $+_G$ that $\phi$ is an injective homomorphism.  Note that the representation size for an element of $G$ is twice the representation size of an element of $T$, and $H'$ and $+_G$ require two invocations of $H$ and $+_T$, respectively.
\end{proof}

\section{Equivalence of incremental multiset hash function definitions}
\label{sec:clarke-equivalence}

\Cref{def:homomorphic-multiset-hash} is based on the definition of an
incremental multiset hash function given by Clarke~{et al.}~\cite{clarke2003incremental}, which we restate as follows:

\begin{definition}
  \label{def:clarke-multiset-hash}
  Let $\mathcal{H}^r : A^{\Z_{\geq 0}} \rightarrow T$ and $+_\mathcal{H}^r \colon T \times T \rightarrow T$ be probabilistic algorithms using randomness $r \in R$, where $T$ is a finite set, and let $\equiv_\mathcal{H}$ be an equivalence relation over $T$.  The triple $(\mathcal{H}, +_\mathcal{H}, \equiv_\mathcal{H})$ is a \emph{multiset hash function} if it satisfies the following properties:
  \begin{enumerate}
  \item $\mathcal{H}^{r_1}(M) \equiv_\mathcal{H} \mathcal{H}^{r_2}(M)$,
for all $M \in \Z^{(A)}$, $r_1, r_2 \in R$;
  \item $+_\mathcal{H}$ respects the equivalence relation $\equiv_{\mathcal{H}}$;
  \item $s_1 +_\mathcal{H}^{r_2} s_2 \equiv_\mathcal{H} s_3$ if
$\mathcal{H}^{r_3}(M_1) \equiv_\mathcal{H} s_1$, $\mathcal{H}^{r_4}(M_2)
\equiv_\mathcal{H} s_2$, and $\mathcal{H}^{r_1}(M_1 + M_2) = s_3$ for all
$M_1, M_2 \in \Z^{(A)}$, $s_1, s_2, s_3 \in T$, $r_1, r_2, r_3, r_4 \in R$.
  \end{enumerate}
\end{definition}

This differs from our definition of a monoid-homomorphic multiset hash function (\cref{sec:monoid-to-group}) only in that it allows for randomness in the hash function and in the addition operation $+_T$.  Note that this randomness is for a fixed hash function, and is independent of the randomness in choosing the hash function from a hash function family.  The multiset hash function \msetaddhash{}~\cite{clarke2003incremental} relies on this randomness for security.  In fact, though, the randomness is not integral to the hashing operation itself, but rather is used as a nonce in encrypting the hash code, which we view as an orthogonal operation.\footnote{Note also that \msetaddhash{} is secure as a keyed hash function but not (under the same assumptions) as a public hash function.}  Therefore, we dispense with this randomness in our definition.

As explained in \cref{sec:monoid-to-group}, if we assume that $(T, +_T)$ has the cancellation property, i.e.\ that $+_T$ does not itself introduce any additional collisions, then a simple construction produces a (group-)homomorphic multiset hash function from any multiset hash function satisfying \cref{def:clarke-multiset-hash}.

\end{document}

%%% Local Variables:
%%% mode: latex
%%% TeX-master: t
%%% End: